\newcommand{\comment}[1]{}
\newcommand{\beqn}[1]{\begin{eqnarray}\label{#1}}
\newcommand{\eeqn}{\end{eqnarray}}
\def\tn{\textnormal}
\def\mc{\mathcal}
\def\PP{{\mathbb P}}
\def\NN{{\mathbb N}}
\def\Hom{\tn{Hom}}
\def\Ob{\tn{Ob}}
\def\to{\rightarrow}
\def\cross{\times}
\def\taking{\colon}
\def\tto{\rightrightarrows}
\def\ss{\subset}
\def\iso{\cong}
\def\|{{\;|\;}}
\def\m1{{-1}}
\def\op{^\tn{op}}
\def\ul{\underline}
\def\ullimit{\ar@{}[rd]|(.3)*+{\lrcorner}}
\def\urlimit{\ar@{}[ld]|(.3)*+{\llcorner}}
\def\lllimit{\ar@{}[ru]|(.3)*+{\urcorner}}
\def\lrlimit{\ar@{}[lu]|(.3)*+{\ulcorner}}
\def\ulhlimit{\ar@{}[rd]|(.3)*+{\diamond}}
\def\urhlimit{\ar@{}[ld]|(.3)*+{\diamond}}
\def\llhlimit{\ar@{}[ru]|(.3)*+{\diamond}}
\def\lrhlimit{\ar@{}[lu]|(.3)*+{\diamond}}
\newcommand{\clabel}[1]{\ar@{}[rd]|(.5)*+{#1}}
\newcommand{\arr}[1]{\ar@<.5ex>[#1]\ar@<-.5ex>[#1]}
\newcommand{\arrr}[1]{\ar@<.7ex>[#1]\ar@<0ex>[#1]\ar@<-.7ex>[#1]}
\newcommand{\arrrr}[1]{\ar@<.9ex>[#1]\ar@<.3ex>[#1]\ar@<-.3ex>[#1]\ar@<-.9ex>[#1]}
\newcommand{\arrrrr}[1]{\ar@<1ex>[#1]\ar@<.5ex>[#1]\ar[#1]\ar@<-.5ex>[#1]\ar@<-1ex>[#1]}
\newcommand{\To}[1]{\xrightarrow{#1}}
\newcommand{\push}[4]{\xymatrix{#1\ar[r]\ar[d] \ar@{}[rd]|(.7)*+{\lrcorner} & #2 \ar[d] \\ #3 \ar[r] & #4}}
\newcommand{\Push}[8]{\xymatrix{#1\ar[r]^-{#5}\ar[d]_-{#6} \ar@{}[rd]|(.7)*+{\lrcorner} & #2 \ar[d]^-{#7} \\ #3 \ar[r]_-{#8} & #4}}
\newcommand{\pull}[4]{\xymatrix{#1\ar[r]\ar[d] \ar@{}[rd]|(.3)*+{\ulcorner} & #2 \ar[d] \\ #3 \ar[r] & #4}}
\newcommand{\Pull}[8]{\xymatrix{#1\ar[r]^-{#5}\ar[d]_-{#6} \ar@{}[rd]|(.3)*+{\ulcorner} & #2 \ar[d]^-{#7} \\ #3 \ar[r]_-{#8} & #4}}
\newcommand{\hpush}[4]{\xymatrix{#1\ar[r]\ar[d] \ar@{}[rd]|(.7)*+{\diamond} & #2 \ar[d] \\ #3 \ar[r] & #4}}
\newcommand{\hPush}[8]{\xymatrix{#1\ar[r]^{#5}\ar[d]_{#6} \ar@{}[rd]|(.7)*+{\diamond} & #2 \ar[d]^{#7} \\ #3 \ar[r]_{#8} & #4}}
\newcommand{\hpull}[4]{\xymatrix{#1\ar[r]\ar[d] \ar@{}[rd]|(.3)*+{\diamond} & #2 \ar[d] \\ #3 \ar[r] & #4}}
\newcommand{\hPull}[8]{\xymatrix{#1\ar[r]^-{#5}\ar[d]_-{#6} \ar@{}[rd]|(.3)*+{\diamond} & #2 \ar[d]^-{#7} \\ #3 \ar[r]_-{#8} & #4}}
\newcommand{\adjoint}[2]{\xymatrix@1{#1\ar@<.5ex>[r] & #2 \ar@<.5ex>[l]}}
\newcommand{\Adjoint}[4]{\xymatrix@1{#2 \ar@<.5ex>[r]^-{#1} & #3 \ar@<.5ex>[l]^-{#4}}}
\def\Sets{{\bf Sets}}
\def\sSets{{\bf sSets}}
\def\mcB{\mc{B}}
\def\mcC{\mc{C}}
\def\mcD{\mc{D}}
\def\mcG{\mc{G}}
\def\mcH{\mc{H}}
\def\mcI{\mc{I}}
\def\mcJ{\mc{J}}
\def\mcS{\mc{S}}
\newtheorem{lemma}[subsection]{Lemma}
\newtheorem{proposition}[subsection]{Proposition}
\theoremstyle{remark}
\newtheorem{remark}[subsection]{Remark}
\newtheorem{example}[subsection]{Example}
\newtheorem{question}[subsection]{Question}
\newtheorem{guess}[subsection]{Guess}
\theoremstyle{definition}
\newtheorem{definition}[subsection]{Definition}
\def\ss{\subseteq}
\def\bD{{\bf \Delta}}
\def\OO{{\mathbb O}}
\begin{document}
\begin{abstract}

Networks are often studied as graphs, where the vertices stand for entities in the world and the edges stand for connections between them.  While relatively easy to study, graphs are often inadequate for modeling real-world situations, especially those that include contexts of more than two entities.  For these situations, one typically uses hypergraphs or simplicial complexes.

In this paper, we provide a precise framework in which graphs, hypergraphs, simplicial complexes, and many other categories, all of which model higher graphs, can be studied side-by-side.  We show how to transform a hypergraph into its nearest simplicial analogue, for example.  Our framework includes many new categories as well, such as one that models broadcasting networks.  We give several examples and applications of these ideas.

\end{abstract}

\author{David I. Spivak}

\title{Higher-dimensional models of networks}

\thanks{This project was supported in part by a grant from the Office of Naval Research: N000140910466.}

\maketitle

\tableofcontents

\section{Introduction}\label{sec:intro}

Researchers studying networks have a variety of mathematical objects which they may use as models.  The most basic among these are directed or undirected graphs (\cite{Bro},\cite{Bol},\cite{Jac}), which consist of a set of nodes and edges.  The nodes represent entities, and the edges represent lines of contact or communication between them.  This conception works well in many circumstances, but when its purpose is to model how communication works in the real world, it suffers from a glaring shortcoming: communication can take place in groups of more than two entities.  

Let us imagine a table of four people having a conversation together at a restaurant; this event is understood as a social network with four nodes.  If everyone can hear everyone else, then graph-theoretically this network is represented by a complete graph on four vertices.  Now imagine a situation of four people playing the game ``telephone" so that each person may only whisper in the ear of another person.  If each person can whisper to anyone else, this scenario is again modeled graph-theoretically by a complete graph on four vertices.  But the situations are extremely different!  One-dimensional graph theory does not capture the distinction between a single 4-person conversation and six 2-person conversations. 

To study such situations, one must turn from (1-dimensional) graphs to a higher-dimensional model.  The two most popular such models are hypergraphs \cite{DW} and simplicial sets \cite{Atk}.  A hypergraph is like a graph, except that edges can connect more than two vertices.   One draws a hypergraph by drawing a set of dots and enclosing certain subsets of them within circles.   Simplicial sets, on the other hand, are visualized as spaces, more specifically as multi-dimensional polygonal shapes made up of vertices, edges, triangles, and higher-dimensional triangles, like tetrahedra.  An $n$-dimensional triangle (called an {\em $n$-simplex}) can be thought of as the ``polyhedral hull" of $n+1$ vertices.  Thus, a 1-dimensional triangle, or 1-simplex, is the polyhedral hull of 2 vertices: it is simply an edge.  Likewise, a 2-simplex is the hull of 3 vertices and is hence a triangle; a 3-simplex is a tetrahedron; and a 0-simplex is just a vertex.  

In general, a simplicial set is the union of many such vertices, edges, triangles, etc.  To make this a little more clear, let us emphasize that any graph is a 1-dimensional simplicial set.  Now suppose we want to make a 2-dimensional simplicial set.  Begin with a graph $G$, and choose a set of three edges $(a,b),(b,c)$, and $(a,c)$ which form a triangle inside $G$:

\begin{center} \begin{picture}(18,27)\put(-4,-4){a}\put(25,-4){b}\put(10,24){c}\put(0,0){$\bullet$}\put(3,2){\vector(1,0){18}}\put(3,2){\vector(1,2){8}}\put(20,0){$\bullet$}\put(10,17){$\bullet$}\put(22,2){\vector(-1,2){8}}\end{picture}\end{center}

\noindent Given such a triangle, one may attach in a 2-simplex to $G$, filling in the triangle $abc$.  Thus a 2-dimensional simplicial set looks like a graph except that some triangles are filled.  To create a 3-dimensional simplicial set, begin with a 2-dimensional simplicial set, choose some groupings of four triangles that form an empty tetrahedron, and fill them.  At this point, the reader can imagine constructing arbitrary simplicial set.  An $n$-dimensional simplicial set is one  in which the largest simplex is an $n$-simplex.

The appropriate simplicial set model for our ``table of four" is the tetrahedron.  This solid shape represents the idea that the conversation is taking place between four entities in a {\em shared space.}  For the situation in which the four people can only whisper to each other, we instead use the simplicial set consisting only of the six edges of the tetrahedron.  One checks that this graph is the complete graph on four vertices, as above.  

Simplicial complexes (a cousin of simplicial sets) were first offered as models for social networks by Atkins \cite{Atk}, and his theory of Q-analysis has been used for some time (\cite{Fre},\cite{JY},\cite{ZTB}).  Unfortunately, simplicial complexes have a major drawback, as compared with simplicial sets, for studying networks.  Namely, no two contexts can have the same set of participants (see Remark \ref{rem:drawbacks of complexes}).  Moreover, definitions of simplicial complexes and hypergraphs, especially their morphisms are not always consistent or precise in the literature.  Neither are methods for transforming a hypergraph into its ``nearest" simplicial complex and vice versa.  Finally, it seems that the simplices in a simplicial complexes are often imagined as unordered, whereas under the strict definition, they are ordered.  We believe these issues can meaningfully detract from the quality of discourse among researchers.

In this paper, we provide precise mathematical definitions of all of the above-mentioned models for higher graphs, as well as the morphisms between them.   We define a framework in which graphs, hypergraphs, and simplicial sets exist along side one-another, and we explain how one can transform objects from one into another.  This framework contains many other useful notions of higher-dimensional graphs, one of which is unordered simplicial sets.  All of this is presented in Section \ref{sec:math defs}.  We hope that the ability to change between various models, as well as to see the broad range of forms these models can take, will assist the researcher in choosing the best one for the situation at hand.

In Section \ref{sec:choosing}, we give several examples of real-world networks (including those of computer processes, military missions, and conference calls) and explain how one chooses the mathematical model which best describes each one.  Finally, in Section \ref{sec:apps} we discuss further applications of these ideas as well as directions for future research.  

\subsection{Notation}

In this paper, compositions of functions are written in the classical mathematical order, i.e. the composite of $A\To{f}B\To{g}C$ is written $g\circ f\taking A\to C$.  Disjoint unions of sets are denoted with the \;$\amalg$\; symbol (e.g. $A\amalg B$ or $\coprod_{i\in I}A_i$).  If $\mcC$ is a category, its set of objects is denoted $\Ob(\mcC)$ and if $c,d\in\Ob(C)$ are objects, we write $\Hom_\mcC(c,d)$ or just $\Hom(c,d)$ for the set of arrows or morphisms between them.  Computer scientists should consult \cite{Pie} for more on category theory.

\subsection{Acknowledgments}

The author would like to thank Paea LePendu and Masoud Valafar for many useful conversations.

\section{Mathematical definitions}\label{sec:math defs}

In this section we give rigorous mathematical definitions for graphs, hypergraphs, simplicial sets, etc.  We begin with background material (Section \ref{subsec:background}), such as the definition of graphs and of hypergraphs, and their respective morphisms.  We then proceed to give a set-theoretic definition of semi-simplicial sets, which are a slightly less sophisticated version of simplicial sets, in Section \ref{subsec:semi}. 

In Section \ref{subsec:cat defs} we switch gears and begin to view everything from a category-theoretic standpoint.  We redefine graphs, hypergraphs, and semi-simplicial sets as categories of functors, and show that these definitions match with the set-theoretic ones given above.  We explain how to compare these various models in Section \ref{subsec:comparisons}, and then finally generalize everything in this section by defining {\em categories of higher graphs} and giving some new examples in Section \ref{subsec:cats of higher graphs}.

Some researchers find precise mathematical definitions to be enlightening, while others find them tedious.  Those in the latter category should take a look at the introductory remarks in Section \ref{subsec:cat defs} before skipping to Section \ref{sec:choosing}.

\subsection{Background}\label{subsec:background}

A graph consists of a set of edges, a set of vertices, and a way to know which two vertices belong to a given edge.  We allow multiple edges to connect the same two vertices, and we allow loops.  Precisely, we have the following definition.

\begin{definition}

A {\em directed graph} is a triple $X=(X_E,X_V,f)$, where $X_E$ and $X_V$ are sets (called the set of {\em edges} and the set of {\em vertices} respectively), and where $f\taking X_E\to X_V\cross X_V$ is a function, which we call the {\em constituent function}. 

\end{definition}

A hypergraph is like a graph in that it has vertices and (hyper-) edges, but the set of vertices contained in a hyperedge can have cardinality other than 2.  For a set $S$, let $\PP_+(S)$ denote the set $\{A\ss S|A\neq\emptyset\}$ of non-empty subsets of $S$.  Recall that elements of $\PP_+(S)$ are unordered sets; we define an ordered analogue as follows: $$\OO_+(S):=\coprod_{n\geq 1} S^n=S\amalg (S\cross S)\amalg (S\cross S\cross S)\amalg\cdots,$$ the set of all non-empty tuples on $S$.  Each tuple $x\in\OO_+(S)$ has a {\em cardinality} $k=|x|$, where $x\in S^k$; we may refer to $x$ as a {\em $k$-edge}. There is an obvious map $\OO_+(S)\to\PP_+(S)$ given by sending each $n$-tuple to its set of elements; note that this map may decrease cardinality.   

\begin{definition}

A {\em directed hypergraph} is a triple $X=(X_E,X_V,f)$, where $X_E$ and $X_V$ are sets (called the set of {\em hyperedges} and {\em vertices} of $X$, respectively), and where $f\taking X_E\to\OO_+(X_V)$ is a function, which we call the {\em constituent function}. 

An {\em undirected hypergraph} is a triple $X=(X_E,X_V,f)$ with $X_E$ and $X_V$ as above, and where $f\taking E\to\PP_+(V)$ is a function, again called the {\em (unordered) constituent function}.

\end{definition}

We can see easily that a directed graph is just a directed hypergraph with no $k$-edges for $k\neq 2$.  This generalizes as follows. 

\begin{definition}

A {\em $k$-uniform hypergraph} (or simply {\em $k$-graph}) is a hypergraph $(X_E,X_V,f)$, such that $|e|=k$ for all hyperedges $e\in E$.

\end{definition}

The above definition is meant to include both directed and undirected hypergraphs, using the appropriate definitions of cardinality given above.

There is a well-defined notion of {\em morphisms} (or functions) between graphs (resp. between hypergraphs), which we give below.  These morphisms tell us what structure needs to be preserved when comparing two graphs (resp. hypergraphs).  The idea is that one should send vertices to vertices and edges to edges, in a consistent way.

In the following definition and several others, we will assume the reader knows the first few definitions of category theory, which can be found in \cite{MacLane} or \cite{Pie}.  One thinks of a category as roughly like a directed graph, the main difference being a certain transitivity law: every pair of edges $a\to b$ and $b\to c$ gives rise to a distinguished {\em composite} edge $a\to c$.  The nodes in a category are called {\em objects} and the edges are called {\em morphisms}.  Note that there is a layer of abstraction here: if a category is like a graph, then the category of hypergraphs is like a graph that has hypergraphs as nodes and functions between hypergraphs as edges. 

\begin{definition}\label{def:cat of graph}

Let $X=(X_E\To{f} X_V\cross X_V)$ and $Y=(Y_E\To{g}Y_V\cross Y_V)$ denote graphs.  A {\em morphism} $X\to Y$ consists of a function on edges $a\taking X_E\to Y_E$ and a function on vertices $b\taking X_V\to Y_V$, such that the diagram $$\xymatrix{X_E\ar[r]^-f\ar[d]_a& X_V\cross X_V\ar[d]^{b\cross b}\\Y_E\ar[r]_-g&Y_V\cross Y_V}$$ commutes.

With the above objects and morphisms, and with the obvious composition law, the {\em category of directed graphs} has now been defined; we denote it $\mcG$.

\end{definition}

\begin{definition}\label{def:cat of hyper}

Let $X=(X_E\To{f}\OO_+(X_V))$ and $Y=(Y_E\To{g}\OO_+(Y_V))$ be directed hypergraphs.  A {\em morphism} $X\to Y$ consists of functions $a\taking X_E\to Y_E$ and $b\taking X_V\to Y_V$, such that the diagram $$\xymatrix{X_E\ar[r]^-{f}\ar[d]_a&\OO_+(X_V)\ar[d]^{b_*}\\Y_E\ar[r]_-{g}&\OO_+(Y_V)}$$ commutes.  Here $b_*(v_1,\ldots v_n)=(b(v_1),\ldots,b(v_n))$.

If $X$ and $Y$ are undirected hypergraphs, we define morphisms $X\to Y$ in the same way, except that in the diagram we replace $\OO_+$ with $\PP_+$ and say that $b_*(\{v_1,\ldots,v_n\})=\{b(v_1),\ldots,b(v_n)\}$.

With the above objects and morphisms, and with the obvious composition law, the {\em category of directed ({\tn resp.} undirected) hypergraphs} has now been defined; we denote it $\mcH$.

\end{definition}

\subsection{Semi-simplicial sets}\label{subsec:semi}

We discussed simplicial sets in the introduction, but everything we said there could be said about semi-simplicial sets as well.  After giving the definition (\ref{def:set def of sset}) of semi-simplicial sets, we discuss how one should visualize them (which can be made formal) in Remark \ref{rem:geom intuition}.  A square is explicitly constructed as a semi-simplicial set in Example \ref{ex:square}.  We will briefly explicate the differences between simplicial and semi-simplicial sets in Remark \ref{rem:difference semi vs. simp}.  

\begin{definition}[Set-theoretic definition of semi-simplicial set]\label{def:set def of sset}

A {\em semi-simplicial set} $X$ consists of a sequence of sets $(X_0,X_1,X_2,\ldots)$, and for every $n\geq 1$ a collection of functions $d^n_0,d^n_1,\ldots,d^n_n$, each of which has domain $X_n$ and codomain $X_{n-1}$.  These functions must satisfy the following identity: \begin{align}\label{dia:semi identity} d^n_i\circ d^{n+1}_j=d^n_{j-1}\circ d^{n+1}_i, \hspace{.2in} \tn{for all } 0\leq i<j\leq n.\end{align}  

For each $n\geq 0$ the set $X_n$ is called the set of {\em $n$-simplices} of $X$.  The function $d^n_i\taking X_n\to X_{n-1}$ is often abbreviated $d_i$ and is called the {\em $i$th face operator}.  The {\em dimension} of $X$ is the largest number $n$ such that $X_n\neq\emptyset$, if a largest such $n$ exists, and $\infty$ otherwise.

We may write $X$ as a sequence of sets and face maps $$\xymatrix{\cdots \arrrr{r}&X_2\arrr{r}&X_1\arr{r}&X_0.}$$

\end{definition}

\begin{remark}\label{rem:geom intuition}

Definition \ref{def:set def of sset} may be difficult to comprehend on first glance.  One should think of $X$ as a polygonal shape made of triangles, like two triangles put together to form a square (see Example \ref{ex:square}).  The set $X_0$ is the set of vertices of that shape, the set $X_1$ is the set of edges, the set $X_2$ is the set of triangles (called 2-simplices in this parlance because they are 2-dimensional), etc.  

Each 2-simplex in $X$ has three edges: across from each vertex in the triangle is an edge not touching it.  The three face operators $d^2_0,d^2_1,d^2_2$ are tasked with telling us which three edges form the boundary of a given 2-simplex.  Going further and applying $d^1_0$ and $d^1_1$ to each of these would then give us vertices for each 2-simplex.  

The above description can be made formal, in the sense that there is a {\em geometric realization} functor from the category of semi-simplicial sets to the category of topological spaces.  See \cite{Hov}.

\end{remark}

\begin{remark}

Implementing finite semi-simplicial sets on a computer is not really harder than implementing (labeled) finite directed graphs.  One inputs a set of vertices, a set of edges, a set of 2-simplices, etc., and then assigns to each edge its two vertices, assigns to each 2-simplex its three edges, etc.  The computer must only check that Formula \ref{def:set def of sset}(\ref{dia:semi identity}) holds.

\end{remark}

\begin{definition}\label{def:1-skeleton}

Let $X=\xymatrix@1{\cdots \arrrr{r}&X_2\arrr{r}&X_1\ar@<.5ex>[r]^{d_1}\ar@<-.5ex>[r]_{d_0}&X_0}$ be a semi-simplicial set.  The {\em 1-skeleton} of $X$ is the (directed) graph $$\xymatrix{X_1\ar@<.35ex>[r]^{d_1}\ar@<-.35ex>[r]_{d_0}&X_0,}\hspace{.2in}\tn{ or one could write } \hspace{.2in} X_1\To{d_1\cross d_0}X_0\cross X_0,$$ obtained by forgetting all higher-order ($n\geq 2$) data in $X$. 

\end{definition}

\begin{example}\label{ex:square}

A filled-in square, composed of two solid triangles glued along their diagonal, is drawn below.

\begin{center}\begin{picture}(40,40)

\put(0,0){a}\put(0,44){c}\put(44,0){b}\put(44,44){d}
\put(4,4){$\bullet$}\put(4,40){$\bullet$}\put(40,4){$\bullet$}\put(40,40){$\bullet$}
\put(6,6){\vector(1,0){36}}\put(6,6){\vector(1,1){36}}\put(6,6){\vector(0,1){36}}
\put(42,6){\vector(0,1){36}}\put(6,42){\vector(1,0){36}}
\end{picture}\end{center}

Here we have $X_0=\{a,b,c,d\}, X_1=\{ab,ac,ad,bd,cd\}, X_2=\{abd,acd\}$, and $X_n=\emptyset$ for $n\geq 3$.  Each triangle has three edges, given by the various $d^2_i$'s: $d^2_0(abd)=bd, d^2_0(acd)=cd, d^2_1(abd)=d^2_1(acd)=ad, d^2_2(abd)=ab,$ and $d^2_2(acd)=ac$.  Similarly, each edge has two vertices: $d^1_1(ac)=d^1_1(ad)=d^1_1(ab)=a$, $d^1_0(ad)=d^1_0(bd)=d^1_0(cd)=d$, $d^1_0(ac)=d^1_1(cd)=c$, and $d^1_0(ab)=d^1_1(bd)=b.$ 

Knowing the above sets $X_i$ and face operator equations allows one to completely reconstruct the square.  They are precisely the data of the semi-simplicial set $X$.

The 1-skeleton of the above square is drawn the same way, but one imagines the triangles as empty in this case; it is just four vertices and five edges.  As a graph, it is written $\{ab,ac,ad,bd,cd\}\to\{a,b,c,d\}^2$, where the map assigns to each edge the pair consisting of its first and second vertex.  This is precisely what is achieved by $d^1_1$ and $d^1_0$ above.

\end{example}

\begin{definition}\label{def:cat of semi}

A {\em morphism of semi-simplicial sets} $f\taking X\to Y$ consists of maps $f_n\taking X_n\to Y_n$ for each $n\in\NN$ such that $f_n\circ d^{n+1}_i=d^{n+1}_i\circ f_{n+1}$ for all $0\leq i\leq n+1$.

With objects defined in Definition \ref{def:set def of sset} and morphisms defined above, and with the obvious composition law, the category of semi-simplicial sets has now been defined; we denote it $\mcS$.

\end{definition}

The set-theoretic definition of simplicial sets is much more involved than that of semi-simplicial sets, the difference being that $n$-simplices in a simplicial set can be ``degenerate," meaning they are really smaller simplices in disguise.  Instead of having to satisfy one identity (like the displayed formula (\ref{dia:semi identity})), a simplicial set must satisfy six.  For this reason, we do not include the set-theoretic definition of simplicial sets; it can be found in \cite{GJ}.  Luckily, the categorical definition of simplicial sets is no harder than that of semi-simplicial sets.   Compare Proposition \ref{prop:cat semi} and Definition \ref{def:sset}.

\subsection{Categorical definitions of graphs, hypergraphs, and simplicial sets}\label{subsec:cat defs}

In this section we give category-theoretic definitions of graphs, hypergraphs, semi-simplicial sets, and simplicial sets.  We find that there is a common category-theoretic framework which holds all three examples (and more, as we will see in the next section).  That is, we can consider each of them as a {\em category of functors}, from some ``indexing" category to the category of sets.  In the case of graphs the indexing category will have two objects, $V$ and $E_2$, and two maps $V\to E_2$; in the case of hypergraphs the indexing category will have (infinitely-many) objects $V,E_1,E_2,E_3,\ldots$ and $n$ maps $V\to E_n$ for each $n$; and in the case of semi-simplicial sets the indexing category will have objects $[0],[1],[2],\ldots$ and ${j+1}\choose{i+1}$ maps $[i]\to[j]$.

Recall that if $\mcC$ is a category then a functor $F\taking\mcC\to\Sets$ assigns to each object $C\in\Ob(\mcC)$ a set $F(C)$, and to each morphism $C\to D$ in $\mcC$ a function between sets $F(C)\to F(D)$.  If $G\taking\mcC\to\Sets$ is another functor, then a natural transformation $\alpha\taking F\to G$ is a coherent way to give a function $F(C)\to G(C)$ for each $C\in\Ob(\mcC)$.  See \cite{MacLane}.

The functors $\mcC\to\Sets$ are thus themselves the objects of a larger category, written $\Sets^\mcC$, where the morphisms in this category are the natural transformations between functors.  While this may be difficult to understand at first, one should remember that the purpose of $\mcC$ is simply to prescribe an arrangement of sets and functions between them.  The category-theoretic definition of graphs, given as Proposition \ref{prop:cat graph} below, may help orient the reader to this way of thinking.

\begin{proposition}[Categorical definition of graphs]\label{prop:cat graph}

Let $\bD_G$ denote the category with two objects $V$ and $E_2$, two morphisms $V\to E_2$, and no other non-identity morphisms.  Let $\bD_G\op$ denote the opposite category.  The category $\mcG$ of directed graphs is isomorphic to $\Sets^{\bD_G\op}$, the category of functors $\bD_G\op\to\Sets$.

\end{proposition}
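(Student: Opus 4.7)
The plan is to construct an explicit isomorphism of categories by unpacking the definition of a functor out of $\bD_G\op$ and comparing it term-by-term with the definition of a directed graph. The proposition is essentially a matter of verifying that both descriptions package together the same data, so the proof should be divided cleanly into an object-level correspondence and a morphism-level correspondence.

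First I would label the two non-identity morphisms of $\bD_G$ as $\sigma, \tau \taking V\to E_2$, so that in $\bD_G\op$ we have two arrows $\sigma\op,\tau\op\taking E_2\to V$. Given any functor $F\taking\bD_G\op\to\Sets$, define $\Phi(F)$ to be the directed graph $(F(E_2),F(V),f)$, where $f\taking F(E_2)\to F(V)\cross F(V)$ is the function $f(x)=(F(\sigma\op)(x),F(\tau\op)(x))$ assembled from the two images of $\sigma\op$ and $\tau\op$ under $F$. Conversely, given a directed graph $X=(X_E,X_V,f)$ with $f=(f_1,f_2)\taking X_E\to X_V\cross X_V$, define $\Psi(X)\taking\bD_G\op\to\Sets$ by $V\mapsto X_V$, $E_2\mapsto X_E$, $\sigma\op\mapsto f_1$, $\tau\op\mapsto f_2$. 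Because $\bD_G$ has no nontrivial composites (the only composable pairs involve identities), functoriality of $\Psi(X)$ is automatic, so this is well-defined.

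Next I would check that $\Phi$ and $\Psi$ are mutually inverse on objects. By the universal property of the Cartesian product, specifying a function $X_E\to X_V\cross X_V$ is the same as specifying an ordered pair of functions $X_E\to X_V$, so $\Phi\circ\Psi=\id_\mcG$ and $\Psi\circ\Phi=\id_{\Sets^{\bD_G\op}}$ follow by direct computation. For morphisms, a natural transformation $\alpha\taking F\to G$ consists of components $\alpha_V\taking F(V)\to G(V)$ and $\alpha_{E_2}\taking F(E_2)\to G(E_2)$ satisfying two naturality squares, one for $\sigma\op$ and one for $\tau\op$. I would observe that the conjunction of these two squares is equivalent, again via the universal property of the product, to the single commutative square appearing in Definition \ref{def:cat of graph} with $a=\alpha_{E_2}$, $b=\alpha_V$. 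This shows $\Phi$ and $\Psi$ are bijective on hom-sets, and compatibility with composition is immediate since composition in both categories is pointwise.

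There is no genuine obstacle here; the only subtlety to be careful about is the direction of arrows. Taking the opposite category $\bD_G\op$ is what turns the two arrows $V\to E_2$ in $\bD_G$ into the two projection-like maps $E_2\to V$ that recover the source and target of an edge. If one wrote $\Sets^{\bD_G}$ instead of $\Sets^{\bD_G\op}$, one would obtain a different (albeit dually related) notion. Highlighting this choice explicitly is the one place where a reader could get confused, so I would make it a point of emphasis in the proof.
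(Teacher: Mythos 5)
Your proposal is correct and follows essentially the same route as the paper's proof: unpack a functor $F\taking\bD_G\op\to\Sets$ as two sets with two parallel functions $F(E_2)\tto F(V)$, identify this with the constituent function $F_E\to F_V\cross F_V$ via the universal property of the product, and match natural transformations with graph morphisms. You simply make the mutually inverse functors and the naturality-square bookkeeping more explicit than the paper does, which is fine.
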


\begin{proof}

Giving a functor $F\taking\bD_G\op\to\Sets$ consists of giving two sets, $F(V)$ and $F(E_2)$, and two functions $F(E_2)\tto F(V)$.  This is precisely the data of a graph $F_E\to F_V\cross F_V$ with vertex set $F_V:=F(V)$ and edge set $F_E:=F(E_2)$.  Giving a natural transformation $F\to F'$ consists of giving a function $F(V)\to F'(V)$ and a function $F(E_2)\to F'(E_2)$, such that the diagrams $$\xymatrix{F(E_2)\ar[r]\ar[d]&F(V)\cross F(V)\ar[d]\\F'(E_2)\ar[r]&F'(V)\cross F'(V)}$$ commute.  Such is precisely a morphism of graphs.

\end{proof}

One can give a similar category-theoretic definition of undirected graphs, using a slightly larger indexing category, but we will skip it.

\begin{proposition}[Categorical definition of hypergraphs]\label{prop:cat hyper}

Let $\bD_H$ denote the category with objects denoted $V, E_1,E_2,E_3,\ldots$ and with distinct morphisms $v_i^n\taking V\to E_n$ for each $1\leq i\leq n$, and no other non-identity morphisms.  Let $\bD_H\op$ denote its opposite.   The category $\mcH$ of directed hypergraphs is isomorphic to $\Sets^{\bD_H\op}$, the category of functors $\bD\op_H\to\Sets$.  

\end{proposition}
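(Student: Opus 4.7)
The plan is to mimic the proof of Proposition \ref{prop:cat graph} almost verbatim, with one extra bookkeeping step: stratifying the edges of a hypergraph by cardinality so that they match the objects $E_1, E_2, E_3, \ldots$ of $\bD_H$.

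First I would unpack the data of a functor $F\colon\bD_H\op\to\Sets$. Since $\bD_H$ has objects $V, E_1, E_2, \ldots$ with the only non-identity morphisms being $v_i^n\colon V\to E_n$ for $1\leq i\leq n$, giving $F$ amounts to choosing a set $F(V)$, a set $F(E_n)$ for each $n\geq 1$, and, for each such $i$ and $n$, a function $F(v_i^n)\colon F(E_n)\to F(V)$ (reversed because $F$ is contravariant). Bundling the $n$ functions for fixed $n$ into a single map $F(E_n)\to F(V)^n$ and then taking the coproduct over $n$ produces a function
$$\coprod_{n\geq 1}F(E_n)\longrightarrow \coprod_{n\geq 1}F(V)^n=\OO_+(F(V)),$$
which I declare to be the constituent function of a hypergraph with $X_V:=F(V)$ and $X_E:=\coprod_{n\geq 1}F(E_n)$.

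In the other direction, given a hypergraph $(X_E,X_V,f)$, the fibers $X_{E_n}:=f^{-1}(X_V^n)\subseteq X_E$ stratify $X_E$ by edge cardinality. Setting $F(V):=X_V$, $F(E_n):=X_{E_n}$, and letting $F(v_i^n)$ be the restriction of $f$ followed by the $i$th projection $X_V^n\to X_V$ yields a functor $\bD_H\op\to\Sets$. The key observation is that a function into the disjoint union $\OO_+(X_V)$ is the same data as a family of functions indexed by $n$, one into each $X_V^n$; this makes it routine to check that the two constructions are mutually inverse on objects.

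For morphisms, a natural transformation $\alpha\colon F\to F'$ consists of $\alpha_V\colon F(V)\to F'(V)$ and $\alpha_{E_n}\colon F(E_n)\to F'(E_n)$, with naturality at $v_i^n$ asserting that $\alpha_V\circ F(v_i^n)=F'(v_i^n)\circ\alpha_{E_n}$. Taking the product over $i$ for fixed $n$ and then the coproduct over $n$ recovers exactly the commuting square of Definition \ref{def:cat of hyper}, with $a=\coprod_n\alpha_{E_n}$, $b=\alpha_V$, and the induced $b_*=\coprod_n\alpha_V^n$. Composition and identities match componentwise, so the assignment is bijective on both objects and morphism sets. There is no real obstacle beyond bookkeeping: one must keep straight that we are working in $\bD_H\op$ (so arrows $V\to E_n$ in $\bD_H$ induce functions $F(E_n)\to F(V)$) and that the indexing $n$ of $n$-edges in a hypergraph matches the indexing of the objects $E_n$.
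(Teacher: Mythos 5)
Your proof is correct and follows essentially the same route as the paper's (which is only a sketch): identify $F(V)$ with the vertex set, $\coprod_{n\geq 1}F(E_n)$ with the hyperedge set, assemble the constituent function from the maps $F(v_i^n)$, and match natural transformations with hypergraph morphisms. You simply supply the details the paper leaves to the reader, in particular the inverse construction via stratifying $X_E$ by the fibers of $f$ over the components $X_V^n$ of $\OO_+(X_V)$.
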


\begin{proof}[Sketch of proof]

Let $F\taking\bD_H\op\to\Sets$ denote an object in $\Sets^{\bD_H\op}$.  As a directed hypergraph, its vertex set is $F_V:=F(V)$, its hyperedge set is $F_E:=\coprod_{n\in\NN}F(E_n)$, and its constituent function is given by the product of the $n$ maps $F(v^n_i)\taking F(E_n)\to F(V)$.  One checks that a natural transformation $F\to F'$ is the same data as the respective morphism of hypergraphs, as in Definition \ref{def:cat of hyper}.

\end{proof}

As promised, we also give a category-theoretic definition of semi-simplicial sets.  For each $n\in\NN$, let $[n]$ denote the ordered set $\{0,1,\ldots,n\}$; note that each $[n]$ is finite and non-empty.  A function $f\taking[m]\to[n]$ is called {\em non-decreasing} if $j\geq i$ implies $f(j)\geq f(i)$, for all $0\leq i,j\leq m$.  The function $f$ is called {\em strictly increasing} if $j>i$ implies $f(j)>f(i)$.  The category whose objects are the ordered sets $[n]$ for $n\in\NN$ and whose morphisms are non-decreasing functions (respectively strictly-increasing functions) is denoted $\bD_\geq$ (respectively $\bD_>$).  

\begin{proposition}[Categorical definition of semi-simplicial sets]\label{prop:cat semi}

Let $\bD_>$ denote the category whose objects are the finite non-empty ordered sets $\{0,1,\ldots,n\}, n\geq 0$, and whose morphisms are strictly increasing functions.  Let $\bD_>\op$ denote the opposite category.  The category $\mcS$ of semi-simplicial sets is isomorphic to $\Sets^{\bD_>\op}$, the category of functors $\bD_>\op\to\Sets$.

\end{proposition}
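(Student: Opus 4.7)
The plan is to mirror the proofs of Propositions \ref{prop:cat graph} and \ref{prop:cat hyper}, but with the extra combinatorial step of identifying a presentation of $\bD_>$ by generators and relations. Concretely, I will build a functor $\Phi\taking\Sets^{\bD_>\op}\to\mcS$, build an inverse $\Psi\taking\mcS\to\Sets^{\bD_>\op}$, and check that both compositions are the identity on objects and morphisms.

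First I would unpack the data of an object $F\taking\bD_>\op\to\Sets$: a sequence of sets $F([0]),F([1]),F([2]),\ldots$ together with, for every strictly increasing function $\phi\taking[m]\to[n]$, a function $F(\phi)\taking F([n])\to F([m])$, compatibly with composition. The crucial step is to give a presentation of $\bD_>$. For each $n\geq 1$ and each $0\leq i\leq n$, let $\delta^n_i\taking[n-1]\to[n]$ be the unique strictly increasing map whose image is $[n]\smallsetminus\{i\}$. I would then prove two claims:
\begin{enumerate}
\item Every morphism of $\bD_>$ can be written as a composite of the $\delta^n_i$'s; this follows because any strictly increasing $\phi\taking[m]\to[n]$ is determined by the $n-m$ elements of $[n]$ not in its image, and removing them one at a time expresses $\phi$ as such a composite.
\item The only relations among these generators are the cosimplicial identities
\[
\delta^{n+1}_j\circ\delta^n_i=\delta^{n+1}_i\circ\delta^n_{j-1},\hspace{.2in}\tn{for all } 0\leq i<j\leq n+1,
\]
which is a direct check by tracking which two elements of $[n+1]$ are omitted.
\end{enumerate}
Granting this, the data of $F$ is equivalent to the data of sets $X_n:=F([n])$ together with face operators $d^n_i:=F(\delta^n_i)\taking X_n\to X_{n-1}$ satisfying $d^n_i\circ d^{n+1}_j=d^n_{j-1}\circ d^{n+1}_i$ for $0\leq i<j\leq n$, which is exactly the data of a semi-simplicial set per Definition \ref{def:set def of sset}. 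This defines $\Phi$ on objects, and $\Psi$ is defined by assigning to a semi-simplicial set $X$ the functor sending $[n]\mapsto X_n$ and a strictly increasing $\phi$ to the composite of the relevant $d^n_i$'s (well-defined by the relations above).

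On morphisms, a natural transformation $\alpha\taking F\to F'$ is a family of functions $\alpha_n\taking F([n])\to F'([n])$ commuting with every $F(\phi)$; by the generation claim it suffices to require commutation with each $\delta^n_i$, which is precisely the condition $f_n\circ d^{n+1}_i=d^{n+1}_i\circ f_{n+1}$ from Definition \ref{def:cat of semi}. This gives a bijection on hom-sets which respects composition and identities, so $\Phi$ and $\Psi$ are mutually inverse functors.

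The main obstacle is the second claim in the presentation, i.e.\ verifying that the cosimplicial identities generate \emph{all} relations in $\bD_>$. I would handle this by a normal-form argument: any composite of $\delta$'s can be rewritten, using the identities above, so that the indices appear in strictly decreasing order from right to left, and this normal form is uniquely determined by the underlying strictly increasing function (it records the omitted elements in decreasing order). This is a standard combinatorial lemma, analogous to the well-known presentation of the simplex category $\bD_\geq$, but simpler because there are no codegeneracy maps to account for; the remaining steps are then routine diagram-chasing.
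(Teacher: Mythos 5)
Your proposal is correct, and the paper in fact offers no proof of Proposition \ref{prop:cat semi} at all: it only records, in the remark that follows, the object-level dictionary $X_n=X(\{0,\ldots,n\})$ and $d^n_i=X(\delta^i_n)$ and asserts that the simplicial identities hold. Your argument --- presenting $\bD_>$ by the coface generators $\delta^n_i$ subject to the cosimplicial identities, proving uniqueness of the decreasing normal form, and reducing naturality of a transformation to commutation with the generators --- is exactly the standard way to upgrade that dictionary to an isomorphism of categories, and it parallels the paper's proofs of Propositions \ref{prop:cat graph} and \ref{prop:cat hyper}. One index-range point deserves care: the complete set of relations you derive in $\bD_>$ runs over $0\leq i<j\leq n+1$ (e.g.\ for $n=1$ there are three relations among the maps $X_2\to X_0$, not one), whereas Definition \ref{def:set def of sset} as printed imposes the identity only for $0\leq i<j\leq n$, so your final sentence matching the two ranges does not literally close. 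This is almost certainly a typo in the paper's definition --- the standard range is $0\leq i<j\leq n+1$, with $d^{n+1}_j\taking X_{n+1}\to X_n$ defined for $0\leq j\leq n+1$ --- and with that reading your two claims and the hom-set bijection go through as you describe.
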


\begin{remark}

Given a semi-simplicial set $X\taking\bD_>\op\to\Sets$, we write $X_n\in\Sets$ to denote the set $X(\{0,1,\ldots,n\})$; its elements are the {\em $n$-simplices of $X$}.  There are $n+1$ possibilities for strictly increasing maps $\delta^i_n\taking \{0,1,\ldots,n-1\}\to\{0,1,\ldots,n\}$, given by skipping exactly one of the numbers $0\leq i\leq n$.  Since $X$ is a contravariant functor, each of these morphisms is sent to a map $X(\delta_n^i)\taking X_n\to X_{n-1}$, which is the {\em $i$th face operator} and denoted $d^n_i$.  These face operators satisfy the properties given in the set-theoretic Definition \ref{def:set def of sset}.

\end{remark}

\begin{remark}\label{rem:drawbacks of complexes}

Strictly speaking, Atkins theory of Q-analysis is often presented in terms of simplicial complexes rather than simplicial sets or semi-simplicial sets (\cite{Atk}).  A simplicial complex is a semi-simplicial set in which no two simplices have the same (ordered) set of vertices.  So, for example, the graph with two nodes and two edges depicted $$\xymatrix@1{\bullet\ar@/^.7pc/[r]\ar@/_.7pc/[r]&\bullet}$$ is {\em not} a simplicial complex.  However, it is a (semi-) simplicial set.  Taking the position that our models should generalize graphs, we must use simplicial sets rather than simplicial complexes.  

In applications, one might want to consider two research papers with the same set of authors as two different simplices linking the same vertices.  For this, one must use (semi-) simplicial sets, not simplicial complexes.

\end{remark}

\begin{definition}[Categorical definition of simplicial sets]\label{def:sset}

Let $\bD_\geq$ denote the category whose objects are the finite non-empty ordered sets $\{0,1,\ldots,n\}, n\geq 0$, and whose morphisms are non-decreasing functions.  Let $\bD_\geq\op$ denote the opposite category.  The {\em category of simplicial sets}, denoted $\sSets$, is the category of functors $\Sets^{\bD_\geq\op}$, whose objects are functors $\bD_\geq\op\to\Sets$ and whose morphisms are natural transformations.

\end{definition}

\comment{

\begin{remark}\label{rem:all you need to know}

This is not a technical paper designed to prove difficult results, but instead an attempt to convince the reader that simplicial sets are better models for certain networks.  Hence, we do not require that the reader understand the rigorous underpinnings of simplicial sets.  If convinced of their usefulness, the reader can easily learn about them on his or her own time.  See \cite{Fri} for a nice exposition.  We hope the reader simply understands the following.  

A simplicial set is a geometric object with corners, edges, triangular faces, tetrahedra, etc., which are called simplices.  An $n$-simplex has dimension $n$, so that a $0$-simplex is a vertex, a $1$-simplex is an edge, a $2$-simplex is a triangle, and an $n$-simplex is an $n$-dimensional triangle.  Each $n$-simplex has lots of sub-simplices, as a triangle has edges and vertices.  All simplices have ordered vertices, and the vertices of a sub-simplex are in the same order as in the simplex.

If simplicial sets are geometric objects comprised of these triangles, then we must be able to fit together these triangles in some ways.  To do so, simply line up some vertices of one simplex with some vertices of another; doing so will match up a face of the first simplex with a face of the other.  Then attach these two simplices along that common face.  In Example \ref{ex:square} we showed a square; it can be obtained by an attaching process as we have described here, by gluing the triangle $abd$ to the triangle $acd$ along the common edge $ad$.

\end{remark}

}

\begin{remark}\label{rem:difference semi vs. simp}

We promised to compare simplicial sets and semi-simplicial sets.  These notions have had a long history, and it took many years to sort out terminology (see \cite{Jam}).  Both are useful, although simplicial sets have become dominant in algebraic topology. 

To compare semi-simplicial sets and simplicial sets, one should begin by comparing their ``indexing categories," $\bD_>$ and $\bD_\geq$, the difference being the existence of non-injective morphisms in $\bD_\geq$.  Geometrically, this allows one to consider simplices (triangles) in a simplicial set that are {\em degenerate}.  A degenerate $n$-simplex is an $n$-dimensional triangle that has been flattened onto one of its $(n-k)$-dimensional faces.  The use of these degenerate simplices has two advantages from a topological perspective: it is easier to build most shapes using simplicial sets (e.g. the 2-sphere), and products of simplicial sets are better behaved under geometric realization (see \cite{Hov}).  See also Remark \ref{rem:power of ssets} for a diverse list of mathematical contexts in which simplicial sets naturally arise, such as algebra and $\infty$-category theory.

\end{remark}

\subsection{Comparing hypergraphs and simplicial sets}\label{subsec:comparisons}

We have found that graphs, hypergraphs, and (semi-) simplicial sets are all examples of the same phenomenon: each is a category of functors from some indexing category to $\Sets$, meaning each is simply a prescribed arrangement of sets.  This discovery leads to an increased ability to change between one model and another; one needs only find a relationship between the categories that index them.  Let us make this more precise.

\begin{lemma}\label{lemma:indexed adjunction}

Let $\mcC$ and $\mcD$ be categories.  Given a functor $A\taking\mcC\to\mcD$, one obtains a functor $R_A\taking\Sets^\mcD\to\Sets^\mcC$ given by $R_A(G)=G\circ A$, for any $G\in\Sets^D$.  There is also a functor $L_A\taking\Sets^\mcC\to\Sets^\mcD$, and a natural bijection $$\Hom_{\Sets^\mcC}(F,R_A(G))\To{\iso}\Hom_{\Sets^\mcD}(L_A(F),G)$$ for any $F\in\Sets^\mcC, G\in\Sets^\mcD$.

\end{lemma}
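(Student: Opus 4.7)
The plan is to recognize this as the standard existence theorem for left Kan extensions along $A$ (the functor $R_A$ is the restriction/pullback, and $L_A$ is the left Kan extension $\tn{Lan}_A$). Since $\Sets$ is cocomplete, the Kan extension exists and gives the desired adjoint.

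First I would verify that $R_A$ is well-defined as a functor: on objects, $R_A(G) = G \circ A$ is clearly a functor $\mcC \to \Sets$, and on morphisms, a natural transformation $\alpha\taking G\to G'$ yields $R_A(\alpha)$ with components $R_A(\alpha)_c = \alpha_{A(c)}$; naturality is immediate from that of $\alpha$. This step is routine.

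Next I would construct $L_A$ via a colimit formula. For each $d\in\Ob(\mcD)$, let $(A\downarrow d)$ denote the comma category whose objects are pairs $(c,\varphi)$ with $c\in\Ob(\mcC)$ and $\varphi\taking A(c)\to d$ in $\mcD$, and whose morphisms $(c,\varphi)\to(c',\varphi')$ are maps $h\taking c\to c'$ with $\varphi'\circ A(h)=\varphi$. There is a forgetful functor $\pi_d\taking (A\downarrow d)\to\mcC$, and I would set
$$L_A(F)(d):=\colim\bigl(F\circ\pi_d\bigr)\;\in\;\Sets.$$
A morphism $d\to d'$ in $\mcD$ induces a functor $(A\downarrow d)\to(A\downarrow d')$ by post-composition, hence a map on colimits; this makes $L_A(F)$ into a functor $\mcD\to\Sets$, and naturality in $F$ makes $L_A$ a functor $\Sets^\mcC\to\Sets^\mcD$.

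Finally I would establish the bijection. A natural transformation $\beta\taking F\to R_A(G)=G\circ A$ is, by definition, a family of functions $\beta_c\taking F(c)\to G(A(c))$ compatible with morphisms of $\mcC$. Given such $\beta$ and any object $d\in\Ob(\mcD)$, each $(c,\varphi)\in(A\downarrow d)$ yields a composite
$$F(c)\To{\beta_c}G(A(c))\To{G(\varphi)}G(d),$$
and one checks that these assemble into a cocone over $F\circ\pi_d$. By the universal property of $L_A(F)(d)$ as a colimit, this cocone factors uniquely through a function $\wt\beta_d\taking L_A(F)(d)\to G(d)$, and one verifies that $\wt\beta$ is natural in $d$. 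The inverse construction takes a natural transformation $\gamma\taking L_A(F)\to G$ and restricts along the canonical maps $F(c)\to L_A(F)(A(c))$ (coming from the object $(c,\id_{A(c)})\in(A\downarrow A(c))$) followed by $\gamma_{A(c)}$. Checking that these assignments are mutually inverse and natural in both $F$ and $G$ is a direct computation with the universal property of the colimit. The main obstacle, really the only subtle point, is setting up the comma category $(A\downarrow d)$ and verifying that the universal property gives the bijection naturally in both variables; once that bookkeeping is done, everything else is routine.
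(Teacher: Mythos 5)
Your proposal is correct and follows exactly the standard argument: the paper itself gives no proof beyond citing \cite[Thm I.5.2]{MM}, and what you have written out is precisely the left Kan extension construction (colimit over the comma category $(A\downarrow d)$, with the adjunction bijection obtained from the universal property of the colimit) that the cited reference supplies. Your cocone compatibility and the inverse via the components at $(c,\id_{A(c)})$ check out, so this is a faithful expansion of the proof the paper delegates to the literature.
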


\begin{proof}

See \cite[Thm I.5.2]{MM}.

\end{proof} 

To bring this down to earth, recall the categories $\Delta_G$ and $\Delta_H$, which respectively index graphs and hypergraphs (see Propositions \ref{prop:cat graph} and \ref{prop:cat hyper}).  There is an obvious functor $\Delta_G\to\Delta_H$ sending $V\mapsto V$ and $E_2\mapsto E_2$; this functor is just the inclusion of $\Delta_G$ as a subcategory of $\Delta_H$.  It also induces a functor on opposite categories which we denote $i\taking\Delta_G\op\to\Delta_H\op$.  Thus we may apply Lemma \ref{lemma:indexed adjunction} to obtain functors $$\Adjoint{L_i}{\mcG}{\mcH}{R_i}.$$  The functor $R_i$ takes a hypergraph and forgets everything but the vertices and the 2-edges, to yield a graph.  The functor $L_i$ takes a graph and considers it as a (2-uniform) hypergraph.  Note that $R_i\circ L_i\taking\mcG\to\mcG$ takes any graph to itself, but the functor $L_i\circ R_i\taking\mcH\to\mcH$ destroys information.

We move on to compare hypergraphs and simplicial sets in the same way, by supplying a functor between indexing categories $\bD_H\to\bD_\geq$ as follows.  On objects, send $V\mapsto [0]$, and for each $n\geq 1$, send $E_n\mapsto [n-1]$.  On morphisms, send $v^n_i\taking V\to E_n$ to the morphism $(0\mapsto i-1)\taking [0]\to[n-1]$, for each $1\leq i\leq n$.  Let $A\taking\bD_H\op\to\bD_\geq\op$ denote its opposite.  By Lemma \ref{lemma:indexed adjunction}, we obtain functors $$\Adjoint{L_A}{\mcH}{\sSets}{R_A}.$$  The functor $R_A$ takes a simplicial set and forgets all relationships between $n$-simplices and $m$-simplices, except when $m=0$.  The functor $L_A$ takes a hypergraph $H$ and sends it to a simplicial set whose $0$-simplices are the vertices of $H$ and whose $n$-simplices are the $(n+1)$-edges of $H$ for $n\geq 1$.  Note that the composition $R_A\circ L_A\taking\mcH\to\mcH$ is not quite the identity: a hypergraph is replaced by the one in which every vertex is considered a 1-edge (and everything else stays the same).  Some authors (\cite{Weis}) define hypergraphs as not having 1-edges to begin with; for them the composite $R_A\circ L_A$ would be the identity.  What these authors call hypergraphs, we call ``basic higher graphs" -- see Definition \ref{def:cat of higher graphs}.

Similarly, there are functors between simplicial sets and semi-simplicial sets, between graphs and (semi-) simplicial sets, etc.  Any functor between indexing categories gives rise to functors between the categories they index.  This is the basic idea behind the next subsection, where we define what it means to be a category of higher graphs.

\subsection{Categories of higher graphs}\label{subsec:cats of higher graphs}

We saw above that category theory provides a framework in which graphs, hypergraphs, simplicial sets, and semi-simplicial sets can all be defined.  One naturally asks: what other categories might be useful for modeling networks?

In order to answer this, we must decide on what we want a theory of higher-dimensional graphs to have.  We clearly want a set of vertices and sets of (higher) edges, such that each higher edge has an underlying set of vertices.  It seems that having a set of 1-edges is not essential, so we do not require it.  

\begin{definition}\label{def:indexing cats}

Let $\bD_B$ denote the category with an object $V$ and objects $\{E_n|n\geq 2\}$, and with morphisms $v^n_i\taking V\to E_n$ for each $1\leq i\leq n$.  We call $\bD_B$ the {\em basic indexing category for higher graphs} (or just {\em basic indexing category} for short). 

An {\em indexing category for higher graphs} (or {\em indexing category} for short) is a pair $(\mcI,\bD_B\To{f}\mcI)$, where $\mcI$ is a category and $f\taking\bD_B\to\mcI$ is a functor.  We may denote $(\mcI,\bD_B\To{f}\mcI)$ simply by $\mcI$, if $f$ is clear from context.

We say that an indexing category $(\mcI,\bD_B\To{f}\mcI)$ is {\em faithful} if $f$ is a faithful functor.

\end{definition}

\begin{remark}\label{rem:three models}

An indexing category for higher graphs is just a category which has an ``interpretation" of vertices and $n$-edges for each $n\geq 2$.  It is faithful if the $n$ vertices of a generic $n$-edge are distinct.  Thus each of $\bD_H, \bD_>,$ and $\bD_\geq$ is a faithful indexing category for higher graphs in an obvious way.  In fact, even $\bD_G$ is an indexing category for higher graphs (in many different ways), because there are infinitely many functors $\bD_B\to\bD_G$.  However, none of these are faithful, so the indexing category for graphs cannot be considered as a faithful model for higher graphs.

The above paragraph focuses on indexing categories for higher graphs.  After making the following definition (\ref{def:cat of higher graphs}), we will be able to say that hypergraphs, semi-simplicial sets, and simplicial sets are all faithful categories of higher graphs.

\end{remark}

\begin{definition}\label{def:cat of higher graphs}

The category of functors $\bD_B\op\to\Sets$, denoted $\mcB=\Sets^{\bD_B\op}$, is called the {\em basic category of higher graphs}.  The objects of $\mcB$ are called {\em basic higher graphs} (or just {\em basic graphs} for short).

A {\em category of higher graphs} (or just {\em category of graphs} for short) is a category of the form $\Sets^{\mcI\op}$, where $(\mcI,\mcB\To{f}\mcI)$ is an indexing category for higher graphs.  We say that it is {\em faithful} if $f$ is.  We call $\Sets^{\mcI\op}$ the {\em category of $\mcI$-indexed graphs}.

Given an indexing category $(\mcI,\mcB\To{f}\mcI)$ there is always a pair of functors $$\Adjoint{L_f}{\mcB}{\Sets^{\mcI\op}}{R_f},$$ in accord with Lemma \ref{lemma:indexed adjunction}, comparing the category of $\mcI$-indexed graphs and the category of basic graphs.  For any $\mcI$-indexed graph $G\taking\mcI\op\to\Sets$, we call $R_f(G)$ its {\em underlying basic graph}.

\end{definition}

\begin{example}[Undirected higher graphs]\label{ex:symmetric}

Let $\Sigma$ denote the category whose objects are the sets $\underline{n}:=\{0,1,\ldots,n\}$, one for each $n\in\NN$, and whose morphisms are simply functions.  Thus there are $(n+1)^{(m+1)}$ morphisms $\ul{m}\to\ul{n}$.  Note that $\Sigma$ is a faithful indexing category for higher graphs via the obvious functor $\mcB\to\Sigma$ sending $V\mapsto \ul{0}$ and $E_n\mapsto\ul{n-1}$ for $n\geq 2$.  The category $\Sets^{\Sigma\op}$ of $\Sigma$-indexed graphs may also be called the category of {\em symmetric simplicial sets}.  See \cite{Gra}.  

One can visualize a symmetric simplicial set as a simplicial set in which the vertices of a given simplex are unordered.  It is the union of vertices, edges, triangles, etc.  Symmetric simplicial sets are the higher-dimensional analogue of undirected graphs, so we sometimes call them ``undirected higher graphs."

\end{example}

\begin{example}[Broadcasting graphs]\label{ex:broadcasting}

For any $n\in\NN$, let $(n)$ denote the category with an initial object $R_n$ and $n$ other objects, and that has no non-identity morphisms except for the unique morphism from $R_n$ to each other object.  One can count that there are $(n+1)^m+n$ functors $(m)\to (n)$.  Let $T$ denote the category whose objects are the categories $(n)$ and whose morphisms are the functors between them.  We call $T$ the {\em indexing category for broadcasts}.

There is an obvious functor $\mcB\to T$ given by $V\mapsto (0)$ and $E_n\mapsto (n-1)$ for each $n\geq 2$.  In fact this functor is faithful, so $T$ is a faithful indexing category for higher graphs.  

One may think of a $T$-indexed graph $F\taking T\op\to\Sets$ as a model for broadcasting and rebroadcasting.  The set $T(n)$ is the set of broadcasts that reach $n$ other entities.  Morphisms $(m)\to(n)$ allow one to connect various broadcasts together to form a network.  We call $T$-indexed graphs {\em broadcasting graphs}.

\end{example}

\subsection{Changing between models}

One goal of this paper is to allow researchers to change between the various models of higher graphs.  In this section we explain how this can be done.  The only necessary ingredient is a morphism of indexing categories.  

\begin{definition}

Let $(\mcI,\mcB\To{f}\mcI)$ and $(\mcJ,\mcB\To{g}\mcJ)$ denote indexing categories for higher graphs.  A {\em morphism of indexing categories} is functor $h\taking\mcI\to\mcJ$ such that $h\circ f=g$.  

\end{definition}

For the following Lemma, we assume the reader is familiar with adjoint functors.  We have implicitly given the definition already in Lemma \ref{lemma:indexed adjunction}, but one may also see \cite{MacLane}.

\begin{proposition}\label{prop:two adjunctions}

Let $\mcI=(\mcI,\mcB\To{f}\mcI)$ and $\mcJ=(\mcJ,\mcB\To{g}\mcJ)$ denote indexing categories.  Given a morphism of indexing categories $h'\taking\mcI\to\mcJ$, there exists two adjunctions \begin{align}\Adjoint{L_h}{\Sets^{\mcI\op}}{\Sets^{\mcJ\op}}{R_h}\\\Adjoint{R_h}{\Sets^{\mcJ\op}}{\Sets^{\mcI\op}}{\forall_h}\end{align} where $h=(h')\op$.

\end{proposition}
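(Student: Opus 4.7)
The plan is to obtain both adjunctions from the standard result that, for any functor $h\taking\mcC\to\mcD$ between small categories, the precomposition functor $R_h\taking\Sets^\mcD\to\Sets^\mcC$ admits both a left adjoint (the left Kan extension along $h$) and a right adjoint (the right Kan extension along $h$), because $\Sets$ is both complete and cocomplete. Lemma \ref{lemma:indexed adjunction}, as quoted from \cite{MM}, records one half of this; the same chapter records the other.

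For adjunction (1), I would apply Lemma \ref{lemma:indexed adjunction} directly, with $\mcC=\mcI\op$, $\mcD=\mcJ\op$, and $A=h=(h')\op\taking\mcI\op\to\mcJ\op$ (which is a well-defined functor because $h'\taking\mcI\to\mcJ$ is). This produces the adjoint pair $L_h\dashv R_h$ with $R_h(G)=G\circ h$ and $L_h$ the left Kan extension along $h$, exactly as required.

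For adjunction (2), I would construct $\forall_h\taking\Sets^{\mcI\op}\to\Sets^{\mcJ\op}$ as the pointwise right Kan extension along $h$. Explicitly, for $F\in\Sets^{\mcI\op}$ and $j\in\mcJ\op$, set
$$\forall_h(F)(j)=\lim_{(i,\,h(i)\to j)\in(j{\downarrow}h)}F(i),$$
with the evident functoriality in $j$; the limit exists because the comma category $(j{\downarrow}h)$ is small and $\Sets$ is complete. The adjunction $R_h\dashv\forall_h$ then follows from the universal property of the right Kan extension, verified by a direct manipulation of natural transformations formally dual to the argument behind Lemma \ref{lemma:indexed adjunction}: a natural transformation $R_h(G)\to F$ assigns to each $i\in\mcI\op$ a map $G(h(i))\to F(i)$, naturally in $i$, and the universal property of each limit packages these into a unique natural transformation $G\to\forall_h(F)$.

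The main obstacle, such as it is, is essentially bookkeeping: the right Kan extension appears less prominently in the paper than the left, so one must either cite its existence from a standard reference such as \cite{MacLane} (or the relevant chapter of \cite{MM}) or write out the pointwise formula above and check its universal property by hand. Note that the compatibility condition $h\circ f=g$ between the structure maps from $\mcB$ plays no role in the proof; the two adjunctions exist for any functor between small categories, and the $\mcB$-compatibility of $h'$ matters only downstream, when one wants to assert that the adjunctions intertwine the various ``underlying basic graph'' functors $R_f$ and $R_g$ of Definition \ref{def:cat of higher graphs}.
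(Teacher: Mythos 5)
Your proposal is correct and is essentially the paper's own proof: the paper simply cites \cite[7.2.2]{MM}, which is precisely the statement that precomposition along $h$ admits both a left adjoint (left Kan extension) and a right adjoint (pointwise right Kan extension), and your sketch is the content of that citation, including the correct observation that the compatibility with $\mcB$ plays no role. One small slip: in your limit formula the objects of the comma category $(j\downarrow h)$ should be pairs $(i,\,j\to h(i))$ (arrows \emph{out of} $j$ in $\mcJ\op$), not $(i,\,h(i)\to j)$ as written; your subsequent universal-property argument uses the correct orientation, so this is only a typo.
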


\begin{proof}

See \cite[7.2.2]{MM}.

\end{proof}

\begin{example}

Let $\mcI=\Delta_G$ and $\mcJ=\Delta_>$ denote the indexing categories for graphs and semi-simplicial sets, respectively, and let $h\taking\mcI\op\to\mcJ\op$ denote the functor $V\mapsto [0], E_2\mapsto [1]$.  We will explain the functors $R_h,L_h$, and $\forall_h$ from Proposition \ref{prop:two adjunctions} in this case.

The functor $R_h$ takes a semi-simplicial set $X$ and returns the graph with vertices $X_0$ and edges $X_1$.  The functor $L_h$ takes a graph $Y$ and returns the semi-simplicial set with $0$-simplces $Y_V$, 1-simplices $Y_E$, and no higher simplices.  

The most interesting of the three is $\forall_h\taking\mcG\to\mcS$.  It takes a graph $Y$ and returns the semi-simplicial set with an $n$-simplex connecting every set of $n+1$ vertices which forms a complete graph (or {\em clique}) in $Y$.  
  
\end{example}

\section{Choosing the appropriate model}\label{sec:choosing}

As discussed in the introduction, graphs are not suitable models for many kinds of networks that exist in the real world.  One way of stating the inadequacy is that graphs do not contain enough data to discern multi-person conversations from private ones.  They do not leave room for contexts shared by more than two people.  

For example, 3-way calling or conference calls cannot be adequately modeled by graphs.  Information is transfered differently in a 3-way call between $A, B,$ and $C$ than it is if the three people can only talk in two way conversations.  The world of conference calls can be modeled by hypergraphs, with an edge connecting any set of vertices that are engaged in a given call.  It would not be wrong to model this with situation with simplicial sets, but doing so will not really contribute anything of value.  However, one may imagine adding a feature to conference calls in which any participant can, if he or she chooses, speak more privately to any subset of the other participants.  This second scenario is nicely modeled by simplicial sets (in the case of our 3-way call, it is modeled by a 2-simplex with vertices $A,B,C$).

Imagine that an intelligence agency (like the CIA) is trying to listen to the above three-way conversation, but for some reason they can only listen to the communication between $B$ and $C$.  This scenario is modeled by a simplicial set with two 2-simplices, $ABC$ and $BCI$, where $I$ represents the intelligence agency.  These two 2-simplices are glued together along the common edge $BC$.  Such a scenario cannot be adequately modeled by hypergraphs, because of the ``sub-edge problem" detailed below in Remark \ref{rem:sub-edge problem}. 

Recall that one goal of this paper is to offer many different models available representing higher graphs.  In this section, we try to make clear how one should choose whether to use graphs, hypergraphs, simplicial sets, symmetric simplicial sets, etc. to model the scenario at hand.  Clearly, one wants to choose the model whose indexing category is as simple as possible, under the condition that it accurately reflects all of the features that one wishes to study.

\subsection{Networks modeled by graphs}

Many networks can be modeled by (1-dimensional) graphs.  When a person sends email, it always passes privately to the recipient.  Even if the recipient list of an email includes multiple addresses, none of the recipients has any way to truly know that the same message was sent to each recipient.  The message does not exist in a shared space.

Graphs are good models for any type of network in which all correspondences are private, or ``one-on-one," because every incidence between nodes involves only two nodes.  The graph model breaks down as soon as one wants to consider threesome, foursomes or $n$-somes interacting all at once.  In an online social network (e.g. Facebook), some communications are posted to a public forum which can be accessed by any member of the group.  Assuming the internal integrity of the system (i.e. assuming Facebook displays the same messages to each member of the group), the message is {\em shared} by the group.  These situations are not well-modeled by graphs.  

The following are some networks that are best modeled by graphs (\cite{BG}).

\begin{enumerate}

\item transportation networks (locations, roads);
\item tournaments (players, one-on-one matches);
\item electrical flow (terminals, wires);
\item private gossip (people, one-on-one conversations).

\end{enumerate}

What these networks have in common is that every interaction is between exactly two nodes.  For this reason, the indexing category $\Delta_G$, itself having only two non-identity arrows $d_0,d_1$, is appropriate.

\subsection{Networks modeled by hypergraphs}

In many applications, there may be groups that interact in a way that subgroups do not; hypergraphs are often appropriate models for these situations.  As we mentioned above, in a conference call any person can speak to the whole group but cannot speak to a given subgroup.  In Facebook, one may post a message to an entire group of which he or she is a part, but not to an arbitrary subgroup of it.  

As another example, one may model the academic research network by a hypergraph in which every researcher is a vertex, and every coauthorship is a hyper-edge.  If three researchers are the coauthors on a paper, that does not imply that some subset of them is also the set of coauthors of a paper.  Coauthorship is a relation that is not ``closed under taking subsets" (see Definition \ref{def:closed under subsets}).  For this reason hypergraphs are the appropriate model of coauthorship.  Similarly, the network consisting of sets of people who have eaten lunch as a group should be modeled by hypergraphs, because this relation is not closed under taking subsets.

\begin{definition}\label{def:closed under subsets}

A {\em symmetric multi-relation} on a set $S$ consists of a subset $R\subseteq\PP(S)$, i.e. a set of subsets of $S$.  We that $R$ is {\em closed under taking subsets} if, whenever $x\subseteq S$ is an element of $R$ and $x'\subseteq x$ is a subset of it, then $x'\in R$ as well.

\end{definition}  

\begin{remark} 

There is also an analogous notion of (non-symmetric) multi-relation $R\subset\amalg_{n\in\NN}S^n$, and of such a multi-relation being closed under taking subsets, but we will not define the latter here.

\end{remark}

We can see that multi-relations can always be modeled by hypergraphs.  If the multi-relation is closed under taking subsets, then semi-simplicial sets (or symmetric simplicial sets, etc.) can also be used.  It is often useful to change ones perspective on a relationship so that it becomes closed under taking subsets.  Let us show what we mean by taking another look at the examples above.

Instead of modeling a ``coauthorship" network, change perspectives and model a ``collaboration" network: if $n$ people have all worked together on a project then so has every subgroup.  Or, instead of looking at the network of sets of people who have had lunch {\em as a group}, look at the network of people who have had lunch {\em together}.  The same goes for conference calls and Facebook groups: if a group of people are involved together, then so is any subgroup.  

In all of these cases, one can change perspectives on what the network is, and then use semi-simplicial sets as a model instead of hypergraphs.  To do so, one would simply apply the functor $L_f$ from Proposition \ref{prop:two adjunctions} (for the unique morphism of indexing categories $f\taking\Delta_H\to\Delta_>$ that sends $E_n\mapsto [n-1]$).

If the network exists by way of technology, the simplicial model may suggest an additional feature for that technology.  For example, as mentioned above, one might want to add a feature to conference calling whereby a participant can speak privately to a subset of the other participants (and similarly for Facebook groups).

Connecting groups together is where (semi-) simplicial sets have a big advantage over hypergraphs.  If two members of a collaborative effort take their knowledge and apply it to another collaborative effort, we may want to imagine this as two simplices joined together along a common edge in the collaboration network.  We show below that hypergraphs have a serious drawback in this context, which we have not yet made explicit.  We call it the sub-edge problem.  

\begin{remark}[Sub-edge problem]\label{rem:sub-edge problem}

Suppose that $X$ is a hypergraph with vertices $\{a,b,c,d\}$ and three edges $e_1=(a,b),e_2=(a,b,c),e_3=(a,b,c)$, so that two edges have the same vertices.  If one were to ask whether $e_1$ is a subedge of $e_2$ or $e_3$, either we would decline to answer (in Japanese we would say {\em mu}: ``unask the question"), or we would be required to say that $e_1$ is a subedge of both $e_2$ and $e_3$.  Using hypergraphs, there is no good notion of sub-edges.  

This issue is clear when looking at the underlying indexing category for hypergraphs, $\bD_H$.  It has no morphisms $E_n\to E_m$ for any $m\neq n$.  Thus a hypergraph $X\taking\bD_H\op\to\Sets$ has no capability for comparing the sets $X(E_m)$ of $m$-edges and $X(E_n)$ of $n$-edges.  

We explain the problem from one more angle.  Suppose that $X'$ is the hypergraph with the same vertices as $X$ but with a slight change in the edge sets: $e'_1=(a,b),e'_2=(b,c),e'_3=(a,b,c)$.  There is {\em no} morphism $X'\to X$ respecting the vertices, even though $e'_2$ looks like a subedge of $e_2$.   

As an application, suppose we are modeling groups in Facebook, for example, and user $a$ joins group $e_2'$.  This event cannot be modeled by a morphism of hypergraphs. 

Hypergraphs have no good notion of sub-edges, but semi-simplicial sets do (have a good notion of sub-simplices).  Thus, the issues in this remark go away when dealing with semi-simplicial sets, or any model of higher graphs whose indexing category has an adequate set of morphisms.

\end{remark}

\subsection{Network modeled by semi-simplicial sets}\label{subsec:ssets}

If we look back at the definition of semi-simplicial sets (Definition \ref{def:set def of sset}), we notice that each simplex has ordered faces and hence ordered vertices.  In many of our discussions above, we have been suppressing this fact.  Semi-simplicial sets may ``look like" the polygonal shapes we have described, but they also have more structure.    This is akin to the fact that graphs can be directed; they may look like 1-dimensional spaces, but each edge has ordered vertices, giving it a directionality.  

To represent a given real-world situation, this ordered structure may be useful or it may not.  If it is not, we could either continue to use semi-simplicial sets as our model but forget the ordered structure, or we could use symmetric (semi-) simplicial sets (see Example \ref{ex:symmetric}).  In this subsection we discuss situations in which semi-simplicial sets, in all their ordered glory, are more appropriate as models than their symmetric analogues are.  In Subsection \ref{subsec:other networks} we explain situations in which other models are more appropriate.

Just like for directed graphs (versus undirected graphs) the ordering on simplices is useful for modeling a network if, within any given interaction of that network, the nodes naturally have an order!  For example, imagine a multi-cast network in which one node sends a message to a list of other nodes, with the provision that every node in the list is also responsible for the successful delivery of that message to all subsequent nodes.  This scenario is best modeled by semi-simplicial sets because the nodes in each simplex are ordered.

As another example, suppose that for every military mission, the soldiers involved are strictly ranked, so that each soldier either outranks or is outranked by each other soldier.  A scenario in which three soldiers $A,B,C$ are simultaneously involved in two different missions, one consisting of five soldiers and the other consisting of six soldiers, would be modeled by the union of a 4-simplex and a 5-simplex, glued together along a common 2-simplex ($ABC$).  The only condition is that the rank of soldiers $A,B,$ and $C$ is the same in both missions.

As a final example, consider a set of interacting processes, say in a computer program.  Suppose that each process has a finite number $n$ of parts which should occur in a fixed order, but that some process $P$ may be a subprocess of two distinct processes $Q,R$.  We would model such a scenario as a semi-simplicial set (in this case with $Q$ and $R$ glued together along $P$).

Just like directed graphs can be used to model undirected situations (one includes edges in both directions for every pair of connected nodes), semi-simplicial sets can be used to model situations that are not inherently ordered.  Again, one simply includes all $(n+1)!$ permutations of any given $n$-simplex. 

Semi-simplicial sets are expressive and yet combinatorial (i.e. they fall under the banner of ``discrete mathematics").  Even more so are their cousins, simplicial sets; we record their wide-ranging mathematical application in the following remark.

\begin{remark}\label{rem:power of ssets}

Simplicial sets are widely used in category theory, algebra, and algebraic topology.  In category theory, any monad on a category gives rise to a natural simplicial resolution of objects in that category.  Thus simplicial sets come up as resolutions of objects in an algebraic category such as groups or algebras, and hence are indispensable in homology theory \cite{Wei}.  They also are sufficient as models for topological spaces up to homotopy \cite{Hov}.  Finally, new work by Joyal \cite{Joy} and by Lurie \cite{Lur} shows that simplicial sets also provide a good model for the theory of $\infty$-categories.

These applications and others may also prove useful or interesting to a computer science researcher.

\end{remark}
 
\subsection{Other networks}\label{subsec:other networks}

Many real-world networks involve interactions between more than two people, as well as a certain kind of interaction between interactions.  Each of these networks should be modeled by a higher graph, but the choice of indexing category is a bit of an art.  

We mentioned networks involving hierarchies in Subsection \ref{subsec:ssets}, using military operations as one example.  However, because the objects of $\bD_\geq$ are strictly ordered sets, we had to assume that the participants in any operation are strictly ranked in order to use simplicial sets as a model.  If instead some participants are ranked equally, then we may choose a different model, i.e. a category of higher graphs with an indexing category whose objects are not so strictly ranked.

The issue is really the anti-symmetric law: $x\leq y$ and $y\leq x$ implies $x=y$.  This holds for any two elements $i,j$ of $[n]\in\bD_\geq$.  Let us relax that restriction.

The category of non-empty finite linear preorders, denoted $\bD_{PrO}$ has sets $X$ as objects, where $X$ has elements $x_1,\ldots x_n$ (for some $n\geq 1$), and an ordering $\succeq$ such that $i\geq j$ implies $x_i\succeq x_j$.  This ordering is not required to be anti-symmetric, but note that as we have defined it, $\succeq$ is automatically reflexive and transitive.  A morphism of these linear pre-orders is just a non-decreasing function.  Note that $\bD_\geq$ is a subcategory of $\bD_{PrO}$; hence $\bD_{PrO}$ can be given the structure of an indexing category for higher graphs.  

The category of $\bD_{PrO}$-indexed graphs is the richest category of higher graphs we have discussed so far.  Its objects look like simplicial sets, except that vertices in any simplex are not strictly ordered.  If, for example, three soldiers $A,B,C$ are ranked equally when working in one operation, but unequally in another, the simplices representing the two operations may be glued together along the ordered 2-simplex.  However, if the three soldiers are ranked $A\succ B\succ C$ in one operation and $A\succ C\succ B$ in another, then the two operations cannot be glued.

There are infinitely many categories that can serve as indexing categories for higher graphs (see Definition \ref{def:indexing cats}).  One chooses the appropriate indexing category by determining the key features of how interactions fit together.

\section{Applications and further research}\label{sec:apps}

In this paper we have given a wide range of mathematical objects which can be used to model higher graphs and networks.  We have shown that one can transform one model into another in a formal way, and tried to make clear how to choose the model which best suits the scenario at hand.  For example, we have discussed conference calls, collaboration networks in academic research, military missions, interlocking computer processes, and more.

To apply these ideas, one might examine real-world scenarios and their existing models, and ask whether another model would be more appropriate.  For example, one might realize that the broadcasting model (Example \ref{ex:broadcasting}) better captures the key features of a certain situation which is currently being modeled with hypergraphs.  If that researcher still has access to the original data, it should be possible to simply create the broadcasting graph; if not, he or she can apply one of the two functors given in Proposition \ref{prop:two adjunctions} to create a ``best guess" for the broadcasting graph associated to the given hypergraph.

Let us give just one more application, before turning to further research.  Consider the network whose nodes are computer programs, and where we consider programs to be interacting if they are outputting data with respect to the same protocols.  In other words, we have an $n$-simplex connecting any set of $n+1$ programs that have agreed to output data under a certain set of protocols.  Simplices overlap when programs agree to more than one set of protocols.  This situation is {\em not} well-modeled by either a simplicial complex (because the same set of programs may be interacting in two different ways), nor by hypergraphs (see the sub-edge problem, Remark \ref{rem:sub-edge problem}).  Instead, the above simplicial set should be examined in various cases to determine how the topology, such as homotopy groups, of a network may provide useful information.

Further research on the subject could go in many possible directions.  For example, one could attempt to provide a systematic way to classify networks that emerge in, say sociology or economics, and determine which category of higher graphs is best suited for each one.  

Another avenue of research is to use fuzzy simplicial sets, rather than just simplicial sets, to model networks.  A fuzzy set is a set in which each element exists only to a certain degree, a number between 0 and 1.  Fuzzy sets form a {\em topos} (\cite{Bar}), which roughly means that, as a category, they are as well-behaved as sets are.  Fuzzy simplicial complexes have been studied before (\cite{EER}), but it does not seem that they have been used in the study of networks.  One should consider a fuzzy simplicial set as a kind of weighted graph.  The idea would be that every vertex, edge, 2-simplex, etc. would have a certain strength, measuring the connectedness of the group of participants.  Making all this precise is not hard, and applications should be numerous \cite{Spi2}.

Another question to be studied is ``how does a network take in data and turn it into information?"  Humans communicate with one-another, either privately or in groups, and pass along ideas.  In doing so, we create our society from within.  If the society is considered a network modeled by a simplicial set, where every conversation is a simplex, then we are interested in how these conversations fuse together to create the zeitgeist.  To apply mathematics to this question, one would need to find a way to pose its conceptual core in mathematical language.  Once done, the question becomes a type of local-global calculation, which is one of the main topics of study in algebraic topology.  Future research should include the use of topological techniques to examine societal questions.

One could also use category-theoretic techniques to study these same societal questions.  Consider a party, modeled as a simplicial set in which the vertices represent people, and in which $n$-simplices represent sets of $n+1$ people who can hear each other.  This simplicial set $P$ is well-suited to understand how messages are transfered at this party, but not how those messages are interpreted.  But suppose we also had access to something like a ``category of vocabularies," where a morphism from one object to another represents how phrases in one person's vocabulary are interpreted in another's.  In this case, there should be a functor from the ``categorification" of $P$ to the category of vocabularies.  If we can make all this precise, we will have a nice model for how a network processes information.

\bibliographystyle{amsalpha}
\bibliography{networks-biblio}

\providecommand{\bysame}{\leavevmode\hbox to3em{\hrulefill}\thinspace}
\providecommand{\MR}{\relax\ifhmode\unskip\space\fi MR }
\providecommand{\MRhref}[2]{%
  \href{http://www.ams.org/mathscinet-getitem?mr=#1}{#2}
}
\providecommand{\href}[2]{#2}
\begin{thebibliography}{EGEZR02}

\bibitem[Atk74]{Atk}
R.H. Atkin, \emph{Mathematical structure in human affairs.}, Crane, Rusak,
  1974.

\bibitem[Bar86]{Bar}
Michael Barr, \emph{Fuzzy set theory and topos theory}, Canad. Math. Bull.
  \textbf{29} (1986), no.~4, 501--508. \MR{MR860861 (88a:03129)}

\bibitem[BJG09]{BG}
J{\o}rgen Bang-Jensen and Gregory Gutin, \emph{Digraphs}, second ed., Springer
  Monographs in Mathematics, Springer-Verlag London Ltd., London, 2009, Theory,
  algorithms and applications. \MR{MR2472389}

\bibitem[Bol98]{Bol}
B{\'e}la Bollob{\'a}s, \emph{Modern graph theory}, Graduate Texts in
  Mathematics, vol. 184, Springer-Verlag, New York, 1998. \MR{MR1633290
  (99h:05001)}

\bibitem[Bro41]{Bro}
R.~L. Brooks, \emph{On colouring the nodes of a network}, Proc. Cambridge
  Philos. Soc. \textbf{37} (1941), 194--197. \MR{MR0012236 (6,281b)}

\bibitem[DW80]{DW}
W.~D{\"o}rfler and D.~A. Waller, \emph{A category-theoretical approach to
  hypergraphs}, Arch. Math. (Basel) \textbf{34} (1980), no.~2, 185--192.
  \MR{MR583768 (83f:05053)}

\bibitem[EGEZR02]{EER}
M.~El-Ghoul, H.~El-Zohny, and S.~Radwan, \emph{Fuzzy incidence matrix of fuzzy
  simplicial complexes and its folding}, Chaos Solitons Fractals \textbf{13}
  (2002), no.~9, 1827--1833. \MR{MR1882983 (2002k:57057)}

\bibitem[Fre80]{Fre}
Linton~C. Freeman, \emph{Q-analysis and the structure of friendship networks},
  International Journal of Man-Machine Studies \textbf{12} (1980), 367--378.

\bibitem[GJ99]{GJ}
Paul~G. Goerss and John~F. Jardine, \emph{Simplicial homotopy theory}, Progress
  in Mathematics, vol. 174, Birkh\"auser Verlag, Basel, 1999. \MR{MR1711612
  (2001d:55012)}

\bibitem[Gra01]{Gra}
Marco Grandis, \emph{Finite sets and symmetric simplicial sets}, Theory Appl.
  Categ. \textbf{8} (2001), 244--252 (electronic). \MR{MR1825431 (2002c:18010)}

\bibitem[Hov99]{Hov}
Mark Hovey, \emph{Model categories}, Mathematical Surveys and Monographs,
  vol.~63, American Mathematical Society, Providence, RI, 1999. \MR{MR1650134
  (99h:55031)}

\bibitem[Jac08]{Jac}
Matthew~O. Jackson, \emph{Social and economic networks}, Princeton University
  Press, Princeton, NJ, 2008. \MR{MR2435744 (2009g:91002)}

\bibitem[Jam96]{Jam}
Ioan~Mackenzie James, \emph{Reflections of the history of topology}, Rend. Sem.
  Mat. Fis. Milano \textbf{66} (1996), 87--96 (1998). \MR{MR1639831}

\bibitem[Joy]{Joy}
Andre Joyal, \emph{Theory of quasi-categories}, In preparation.

\bibitem[JY98]{JY}
Thomas~L. Jacobson and Wenjie Yan, \emph{Q-analysis techniques for studying
  communication content}, Quality and Quantity \textbf{32} (1998), no.~1,
  93--108.

\bibitem[Lur09]{Lur}
Jacob Lurie, \emph{Higher topos theory}, Annals of Mathematics Studies, vol.
  170, Princeton University Press, Princeton, NJ, 2009. \MR{MR2522659}

\bibitem[ML98]{MacLane}
Saunders Mac~Lane, \emph{Categories for the working mathematician}, second ed.,
  Graduate Texts in Mathematics, vol.~5, Springer-Verlag, New York, 1998.
  \MR{MR1712872 (2001j:18001)}

\bibitem[MLM94]{MM}
Saunders Mac~Lane and Ieke Moerdijk, \emph{Sheaves in geometry and logic},
  Universitext, Springer-Verlag, New York, 1994, A first introduction to topos
  theory, Corrected reprint of the 1992 edition. \MR{MR1300636 (96c:03119)}

\bibitem[Pie91]{Pie}
Benjamin~C. Pierce, \emph{Basic category theory for computer scientists},
  Foundations of Computing Series, MIT Press, Cambridge, MA, 1991.
  \MR{MR1120026 (93c:18002)}

\bibitem[Spi09]{Spi2}
David~I. Spivak, \emph{Agents as nodes in a fuzzy simplicial set}, Presented at
  the Agent-Based Complex Systems conference, IPAM, October 2009.

\bibitem[Wei]{Weis}
Eric~W. Weisstein, \emph{Hypergraph}, From Wolfram MathWorld. Online:
  http://mathworld.wolfram.com/Hypergraph.html.

\bibitem[Wei94]{Wei}
Charles~A. Weibel, \emph{An introduction to homological algebra}, Cambridge
  Studies in Advanced Mathematics, vol.~38, Cambridge University Press,
  Cambridge, 1994. \MR{MR1269324 (95f:18001)}

\bibitem[ZTBJ08]{ZTB}
Matthias Zimmermann, Tobias Teich, Thomas Burghardt, and Hendrik Jahn,
  \emph{Social compatibilities and competencies in network structures: a
  quantitative approach}, International Journal of Services Operations and
  Informatics \textbf{3} (2008), no.~2, 127--141.

\end{thebibliography}

\end{document}